\theoremstyle{plain}
\newtheorem{thm}{\protect\theoremname}
\newenvironment{proof}[1][\protect\proofname]{\par
	\normalfont\topsep6\p@\@plus6\p@\relax
	\trivlist
	\itemindent\parindent
	\item[\hskip\labelsep\scshape #1]\ignorespaces
}{%
	\endtrivlist\@endpefalse
}
\providecommand{\proofname}{Proof}
\providecommand{\theoremname}{Theorem}
\begin{document}
\title{PT-symmetry entails pseudo-Hermiticity regardless of diagonalizability}
\author{Ruili Zhang}
\affiliation{School of Science, Beijing Jiaotong University, Beijing 100044, China}
\author{Hong Qin }
\thanks{Corresponding author, hongqin@princeton.edu}
\affiliation{Plasma Physics Laboratory, Princeton University, Princeton, NewJersey
08543, USA}
\affiliation{School of Physical Sciences, University of Science and Technology
of China, Hefei 230026, China}
\author{Jianyuan Xiao}
\affiliation{School of Physical Sciences, University of Science and Technology
of China, Hefei 230026, China}
\begin{abstract}
We prove that in finite dimensions, a Parity-Time (PT)-symmetric Hamiltonian
is necessarily pseudo-Hermitian regardless of whether it is diagonalizable
or not. This result is different from Mostafazadeh's, which requires
the Hamiltonian to be diagonalizable. PT-symmetry breaking often occurs
at exceptional points where the Hamiltonian is not diagonalizable.
Our result implies that PT-symmetry breaking is equivalent to the
onset of instabilities of pseudo-Hermitian systems, which was systematically
studied by Krein et al. in 1950s. In particular, we show that the
mechanism of PT-symmetry breaking is the resonance between eigenmodes
with different Krein signatures.
\end{abstract}
\maketitle
In quantum physics, observables are assumed to be Hermitian operators.
Bender and collaborators \cite{bender1998real,bender2002complex,bender2007making}
proposed to relax this fundamental assumption and considered Parity-Time
(PT)-symmetric operators. The concept and techniques of PT-symmetry
have been applied to many branches of physics \cite{jones1999energy,dorey2007ode,makris2008beam,klaiman2008visualization,longhi2009bloch,schomerus2010quantum,chong2011p,feng2011nonreciprocal,szameit2011p,schindler2011experimental,regensburger2012parity,peng2014parity,ablowitz2016inverse,jahromi2017statistical,hodaei2017enhanced,qin2019kelvin}.
Although PT-symmetry was first studied in infinite-dimensional systems,
many of the current applications are in finite dimensions.

When discussing PT-symmetry, a related property, pseudo-Hermiticity,
is often considered. Pseudo-Hermitian operators were introduced by
Dirac and Pauli as a class of non-Hermitian operators \cite{Dirac1942,Pauli1943,Lee69}.
Investigating the relation between PT-symmetry and pseudo-Hermiticity
may reveal important mathematical and physical structures of non-Hermitian
operators. In this regard, Mostafazadeh proved that a diagonalizable
PT-symmetric Hamiltonian is pseudo-Hermitian \cite{mostafazadeh2002pseudo,mostafazadeh2002pseudoII,mostafazadeh2002pseudoIII}.

In this paper, we prove that in finite dimensions, a PT-symmetric
Hamiltonian is necessarily pseudo-Hermitian regardless of whether
it is diagonalizable or not. We first prove that for a Hamiltonian
$H$, a sufficient and necessary condition of pseudo-Hermiticity is
that $H$ is similar to its Hermitian conjugate $\overline{H}$ (Theorem
\ref{thm:2}). Then because a PT-symmetric Hamiltonian is similar
to its Hermitian conjugate, it is pseudo-Hermitian (Theorem \ref{thm:3}).
We emphasize that this result is different from Mostafazadeh's \cite{mostafazadeh2002pseudo,mostafazadeh2002pseudoII,mostafazadeh2002pseudoIII}.
The difference is significant, because our result relaxes the diagonalizability
requirement. As we know, most of the interesting PT-symmetry breaking
happens at exceptional points where the Hamiltonian is not diagonalizable.
Our result is applicable when studying these effects.

As such an application, we show that a theoretical description of
PT-symmetry breaking, which is arguably the most important topic in
PT-symmetry physics, can be built upon the mathematical work on the
instabilities of pseudo-Hermitian systems developed by Krein, Gel\textquoteright fand
and Lidskii \cite{Krein1950,Gel1955,KGML1958} in 1950s. For a pseudo-Hermitian
Hamiltonian, its eigenvalues are symmetric with respect to the real
axis. As the system parameters vary, a necessary and sufficient condition
for the onset of instability is that two eigenmodes with opposite
Krein signatures collide, which is the so-called Krein collision.
These results can be directly applied to PT-symmetric Hamiltonians
due to Theorem \ref{thm:3}, implying that PT-symmetry breaking occurs
when and only when eigenmodes with different Krein signatures collide.
Note that when PT-symmetry breaking happens, the Hamiltonian can be
either diagonalizable or non-diagonalizable. But PT-symmetry is often
broken at the exceptional points where the Hamiltonian is not diagonalizable.
As an example, we show that the governing equations of the classical
Kelvin-Helmholtz instability, which was proven to be PT-symmetric
\cite{qin2019kelvin}, is pseudo-Hermitian, and the Kelvin-Helmholtz
instability is the result of PT-symmetry breaking triggered by the
Krein collision.

We start from the definitions of PT-symmetry, pseduo-Hermiticity,
and another related concept, i.e., G-Hamiltonian matrix. Consider
the linear system specified by a Hamiltonian $H$,
\begin{equation}
\dot{\boldsymbol{x}}=-iH\boldsymbol{x}=A\boldsymbol{x}\thinspace,\label{eq:ls}
\end{equation}
where $A$ is defined to be a shorthand notation of $-iH.$

The Hamiltonian $H$ in Eq.\,(\ref{eq:ls}) is called PT-symmetric
\cite{bender1998real,bender2002complex,bender2007making} if it commutes
with the parity-time operator $PT$, i.e.,
\begin{equation}
PTH-HPT=0\thinspace.\label{eq:PT}
\end{equation}
Here $P$ is a linear operator satisfying $P^{2}=I$ and $T$ is the
complex conjugate operator. In the present study, we will focus on
finite-dimensional systems, for which $H$, $A$ and $P$ can be represented
by matrices, and Eq.\,\eqref{eq:PT} is equivalent to
\begin{equation}
P\bar{H}-HP=0\,,\label{eq:PT1}
\end{equation}
where $\bar{H}$ denotes the complex conjugates of $H$.

The Hamiltonian $H$ in Eq.\,(\ref{eq:ls}) is called pseudo-Hermitian
\cite{Dirac1942,Pauli1943,Lee69} if there exits a non-singular Hermitian
matrix $G$ such that
\begin{equation}
H^{\dagger}G-GH=0\thinspace,\label{eq:PH}
\end{equation}
where $H^{\dagger}$ is the conjugate transpose of the matrix $H$
.

The matrix $A=-iH$ in Eq.\,(\ref{eq:ls}) is called G-Hamiltonian
\cite{Krein1950,Gel1955,KGML1958} if there exist a non-singular Hermitian
matrix $G$ and a Hermitian matrix $S$ such that

\begin{equation}
A=iG^{-1}S.\label{eq:GH}
\end{equation}

The concept of pseudo-Hermiticity was first introduced by Dirac and
Pauli in 1940s \cite{Dirac1942,Pauli1943,Lee69}. G-Hamiltonian matrix
was defined by Krein et al. in 1950s \cite{Krein1950,Gel1955,KGML1958}
in the study of linear dynamical systems satisfying the G-Hamiltonian
condition (\ref{eq:GH}). For finite-dimensional systems, these two
concepts are equivalent.
\begin{thm}
\label{thm:1} For a finite-dimensional system $\dot{\boldsymbol{x}}=-iH\boldsymbol{x}=A\boldsymbol{x}$,
$H$ is pseudo-Hermitian if and only if $A$ is a G-Hamiltonian matrix.
\end{thm}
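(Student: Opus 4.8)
The plan is to show that a single, shared non-singular Hermitian matrix $G$ can serve in both definitions, so that the equivalence reduces to translating the pseudo-Hermiticity relation $H^{\dagger}G=GH$ into the factorization $A=iG^{-1}S$ through the dictionary $S=-GH$. The only algebraic content is bookkeeping: tracking the factors of $i$ coming from $A=-iH$ (equivalently $H=iA$), tracking the conjugate transposes, and using the fact that $(G^{-1})^{\dagger}=(G^{\dagger})^{-1}=G^{-1}$ whenever $G$ is Hermitian.

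For the forward direction, I would assume $H$ is pseudo-Hermitian, so that there is a non-singular Hermitian $G$ with $H^{\dagger}G=GH$. I would then set $S=-GH$ and verify that it is Hermitian: since $G^{\dagger}=G$, one has $S^{\dagger}=-H^{\dagger}G$, and the pseudo-Hermiticity relation turns this into $-H^{\dagger}G=-GH=S$, so $S^{\dagger}=S$. Finally $iG^{-1}S=iG^{-1}(-GH)=-iH=A$, exhibiting $A$ as a G-Hamiltonian matrix with the very same $G$.

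For the converse, I would assume $A=iG^{-1}S$ with $G$ non-singular Hermitian and $S$ Hermitian. Using $H=iA$, this reads $H=-G^{-1}S$. Taking conjugate transposes and invoking $G^{\dagger}=G$ and $S^{\dagger}=S$ gives $H^{\dagger}=-SG^{-1}$. Then $H^{\dagger}G=-S$ and $GH=-S$ coincide, so $H^{\dagger}G=GH$ and $H$ is pseudo-Hermitian, again with the same $G$.

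I expect no genuine obstacle here: the statement is essentially the same linear constraint written in two notations, and the proof is a short verification. The point worth flagging is that the equivalence holds with an \emph{identical} $G$ in both definitions, and that the non-singularity of $G$ and the Hermiticity of $G$ and $S$ transfer directly across the two forms. In particular nothing about the diagonalizability or the spectrum of $H$ enters, which is exactly why the correspondence is purely algebraic and unconditional.
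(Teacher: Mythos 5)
Your proof is correct and is precisely the ``straightforward'' verification the paper has in mind: the paper omits the details, stating only that the result follows directly from the two definitions, and your dictionary $S=-GH$ (with the same $G$ serving in both roles) is exactly that argument, with the signs and Hermiticity checks done properly. Nothing further is needed.
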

The proof of Theorem \ref{thm:1} is straightforward according to
the definitions of pseudo-Hermitian and G-Hamiltonian matrices. But
we give this fact the status of a theorem to highlight the exact equivalence
between these two concepts independently defined by physicists and
mathematicians. We will mostly use the terminology of pseudo-Hermiticity
exclusively hereafter.

Now we establish a necessary and sufficient condition for pseudo-Hermiticity.
\begin{thm}
\label{thm:2}For a matrix $H\in C^{n\times n}$, it is pseudo-Hermitian
if and if only it is similar to its complex conjugate $\bar{H}$.
\end{thm}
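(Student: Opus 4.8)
The plan is to route everything through the observation that pseudo-Hermiticity, $H^\dagger G = G H$ with $G$ a non-singular Hermitian matrix, is exactly the statement that $H$ is similar to $H^\dagger$ \emph{via a Hermitian matrix}, since the relation rearranges to $H^\dagger = G H G^{-1}$. The theorem as stated refers instead to the complex conjugate $\bar H$, so the first thing I would record is the standard fact (via rational or Jordan canonical form) that every square complex matrix is similar to its transpose. Applied to $H^\dagger = \bar H^{T}$, this shows $H^\dagger$ and $\bar H$ are always similar to each other. Hence the conditions ``$H$ similar to $\bar H$'' and ``$H$ similar to $H^\dagger$'' are interchangeable, and the whole theorem reduces to showing that $H$ is similar to $H^\dagger$ by \emph{some} invertible matrix if and only if it is similar to $H^\dagger$ by a \emph{Hermitian} one.

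The forward direction is then immediate: if $H$ is pseudo-Hermitian, then $H = G^{-1} H^\dagger G$ exhibits $H$ as similar to $H^\dagger$, and composing with the transpose similarity gives $H$ similar to $\bar H$ by transitivity. The converse is where the real work lies.

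For the converse I would start from an arbitrary non-singular $S$ with $H^\dagger S = S H$ (obtained from the hypothesis together with the transpose fact) and try to manufacture a Hermitian intertwiner. The key trick is that taking the conjugate transpose of $H^\dagger S = S H$ yields $H^\dagger S^\dagger = S^\dagger H$, so $S^\dagger$ intertwines as well; consequently every matrix of the form $G_\theta = e^{i\theta} S + e^{-i\theta} S^\dagger$ is Hermitian \emph{and} satisfies $H^\dagger G_\theta = G_\theta H$. The main obstacle is non-singularity: I need at least one value of $\theta$ for which $G_\theta$ is invertible, since $S + S^\dagger$ alone could well be singular. I expect to settle this by a counting argument, writing $\det(e^{i\theta}S + e^{-i\theta}S^\dagger) = e^{-in\theta}\det(u S + S^\dagger)$ with $u = e^{2i\theta}$, and noting that $\det(u S + S^\dagger)$ is a polynomial in $u$ whose top coefficient is $\det S \neq 0$, hence not identically zero and possessing at most $n$ roots; since $u = e^{2i\theta}$ ranges over the infinite unit circle, some $\theta$ avoids all of them. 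That $\theta$ produces the required non-singular Hermitian $G$, which certifies $H$ as pseudo-Hermitian. The delicate point to get right is precisely this passage from ``an intertwiner exists'' to ``a Hermitian \emph{and} invertible intertwiner exists''; everything else is bookkeeping around the transpose-similarity fact.
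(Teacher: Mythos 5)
Your proof is correct, and both of its delicate points are handled properly: the observation that $S^\dagger$ intertwines whenever $S$ does, and the root-counting argument that some $G_\theta = e^{i\theta}S + e^{-i\theta}S^\dagger$ is invertible (since $p(u)=\det(uS+S^\dagger)$ has leading coefficient $\det S \neq 0$, it is a nonzero polynomial with at most $n$ roots, while the unit circle is infinite). However, your route for sufficiency is genuinely different from the paper's. The paper also gets necessity the easy way, but for sufficiency it passes to the Jordan canonical form $H = Q^{-1}JQ$, notes that similarity to $\bar H$ forces the blocks to be either real-eigenvalue Jordan blocks $J(a)$ or conjugate pairs $\mathrm{diag}\bigl(J(a+bi),\,J(a-bi)\bigr)$, verifies by hand that the anti-diagonal reversal matrix is a Hermitian intertwiner for each such block, and assembles $G = Q^\dagger\,\mathrm{Diag}(G_1',\dots,G_k')\,Q$. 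The trade-off is this: the paper's construction is explicit and displays the structure (eigenvalue symmetry with respect to the real axis and conjugate pairing of Jordan blocks) that the paper reuses afterward in its Krein-signature discussion, but it requires block-level bookkeeping and tacitly uses that the Jordan structures at $\lambda$ and $\bar\lambda$ match. Your pencil trick, a Taussky--Zassenhaus-style Hermitization, isolates the real content of the theorem in one clean and more general lemma --- any invertible intertwiner between $H$ and $H^\dagger$ can be upgraded to a Hermitian invertible one --- at the cost of being non-explicit about $G$, and it still consumes canonical-form theory through the black-box fact that every square matrix is similar to its transpose.
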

\begin{proof}
Necessity is easy to prove. If a Hamiltonian is pseudo-Hermitian,
i.e., satisfying Eq.\,\eqref{eq:PH}, then $H=G^{-1}H^{\dagger}G$.
Thus matrix $H$ is similar to $H^{\dagger}$, and also to $\bar{H}$.

We prove the sufficiency by constructing the Hermitian matrix $G$.
Matrix $H$ can be written as
\begin{equation}
H=Q^{-1}JQ\thinspace,
\end{equation}
where $J$ is its Jordan canonical form and $Q$ is a reversible matrix.
The Jordan canonical form consists of several Jordan blocks of the
form
\begin{equation}
J(\lambda)=\left(\begin{array}{cccc}
\lambda & 1\\
 & \ddots & \ddots\\
 &  & \lambda & 1\\
 &  &  & \lambda
\end{array}\right)_{m\times m}\thinspace.
\end{equation}
When $m=1$, the Jordan block $J(\lambda)$ is reduced to $\lambda$.
If $H$ is similar to $\bar{H}$, then its eigenvalues are symmetric
with respect to the real axis, and they are either real numbers or
complex number pairs of the form $\lambda=a+bi$ and $\bar{\lambda}=a-bi$,
where $a$ and $b$ are real numbers. Accordingly, there are two kinds
of matrix blocks
\begin{equation}
\begin{aligned}F_{1}=J(a)=\left(\begin{array}{cccc}
a & 1\\
 & \ddots & \ddots\\
 &  & a & 1\\
 &  &  & a
\end{array}\right)_{m\times m}\thinspace\text{and } & F_{2}=\left(\begin{array}{cc}
J(a+bi) & 0\\
0 & J(a-bi)
\end{array}\right)_{2l\times2l}.\end{aligned}
\label{eq:J123}
\end{equation}
The Jordan matrix can now be expressed as $J=Diag(M_{1},M_{2},\cdots,M_{k}),$
where $M_{j}$ is in the form of $F_{1}$ or $F_{2}$. In the following,
we prove that both types of matrix blocks are pseudo-Hermitian. For
both types of matrix blocks, we find that Hermitian matrix
\begin{equation}
G_{j}^{'}=\left(\begin{array}{cccc}
0 & \cdots & 0 & 1\\
\vdots & \iddots & 1 & 0\\
0 & \iddots & \iddots & \vdots\\
1 & 0 & \cdots & 0
\end{array}\right)
\end{equation}
satisfies the condition $M_{j}^{\dagger}G_{j}^{'}-G_{j}^{'}M_{j}=0$,
i.e., $M_{j}$ is pseudo-Hermitian. Next we construct a larger Hermitian
matrix $G^{'}$ using $G_{j}^{'}$  as follows,
\begin{equation}
G^{'}=Diag(G_{1}^{'},G_{2}^{'},\cdots,G_{k}^{'}),
\end{equation}
and the Jordan canonical form of $H$ satisfies $J^{\dagger}G^{'}-G^{'}J=0$.
Let
\begin{align}
G & =Q^{\dagger}G^{'}Q\thinspace,\label{eq:Q-1}
\end{align}
and we obtain
\begin{equation}
\begin{aligned}H^{\dagger}G-GH\thinspace & =\left(Q^{-1}JQ\right)^{\dagger}G-GQ^{-1}JQ\\
 & =Q^{\dagger}J^{\dagger}Q^{-\dagger}Q^{\dagger}G^{'}Q-Q^{\dagger}G^{'}QQ^{-1}JQ\\
 & =Q^{\dagger}\left(J^{\dagger}G^{'}-G^{'}J\right)Q\\
 & =0,
\end{aligned}
\end{equation}
where $G$ is a non-singular Hermitian matrix. This completes the
proof that $H$ is pseudo-Hermitian.
\end{proof}
The theorem is proved by constructing a non-singular Hermitian matrix
$G$ for the similarity transformation between $H$ and $\bar{H}$.
But $G$ is not unique. For a given $H$, we can find more than one
non-singular Hermitian matrices $G$. In practice, one does not need
to follow the construction procedure given in Theorem \ref{thm:2}
to find $G$. It is often found by direct calculation.
\begin{thm}
For finite-dimensional systems, a PT-symmetric Hamiltonian $H$ is
necessarily pseudo-Hemitian. \label{thm:3}
\end{thm}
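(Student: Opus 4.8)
The plan is to reduce Theorem \ref{thm:3} directly to Theorem \ref{thm:2}, which has already established that $H$ is pseudo-Hermitian if and only if $H$ is similar to its complex conjugate $\bar{H}$. Consequently, the entire task collapses to showing that the PT-symmetry condition \eqref{eq:PT1} forces $H$ and $\bar{H}$ to be similar; once that is in hand, pseudo-Hermiticity follows for free.

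The first thing I would do is record the one structural fact that makes everything work: the parity operator $P$ is invertible. Indeed, from $P^{2}=I$ we have $P^{-1}=P$. This is worth isolating explicitly, because it is precisely what upgrades the PT relation from a mere intertwining of $H$ and $\bar{H}$ into a genuine similarity.

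Next I would rearrange the PT-symmetry relation $P\bar{H}-HP=0$ into the form $HP=P\bar{H}$ and multiply on the right by $P^{-1}=P$, obtaining $H=P\bar{H}P^{-1}$. This is exactly the statement that $H$ is similar to $\bar{H}$, with the parity matrix $P$ itself serving as the conjugating transformation. Applying Theorem \ref{thm:2} to this similarity then yields immediately that $H$ is pseudo-Hermitian, which completes the argument.

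I do not expect a genuine obstacle in this final step, since all the substantive content was absorbed into Theorem \ref{thm:2}, whose Jordan-form construction treats non-diagonalizable $H$ on exactly the same footing as the diagonalizable case. The only point demanding care is to invoke the invertibility of $P$ so that the PT condition is read as a similarity rather than a one-sided relation; it is also worth emphasizing that diagonalizability is never used anywhere in the chain, so the conclusion persists at the exceptional points where $H$ fails to be diagonalizable—precisely the regime that distinguishes this result from Mostafazadeh's.
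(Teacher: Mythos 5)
Your proposal is correct and follows exactly the paper's own route: the paper likewise reads the PT condition $P\bar{H}=HP$ (with $P$ invertible since $P^{2}=I$) as the similarity $H=P\bar{H}P^{-1}$ and then invokes Theorem \ref{thm:2}. You merely spell out the invertibility step that the paper leaves implicit, which is a fair bit of added clarity but not a different argument.
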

\begin{proof}
By the definition of PT-symmetry, i.e., Eq.\,\eqref{eq:PT1}, $H$
is similar to $\bar{H}$. Thus, according to Theorem \ref{thm:2},
it is pseudo-Hermitian.
\end{proof}
Theorem \ref{thm:3} is the main theorem in this paper, and we would
like to emphasize again that it holds regardless of whether $H$ is
diagonalizable or not. We note that Mostafazadeh\textquoteright s
result \cite{mostafazadeh2002pseudo,mostafazadeh2002pseudoII,mostafazadeh2002pseudoIII},
which states that diagonalizable PT-symmetric Hamiltonians are pseudo-Hermitian,
is different from Theorem \ref{thm:3}.

As an application of Theorem \ref{thm:3}, we investigate the mechanism
of PT-symmetry breaking in the framework of pseudo-Hermiticity. Theorem
\ref{thm:3} implies that PT-breaking is equivalent to the onset of
instabilities of pseudo-Hermitian matrices, which was systematically
studied by Krein, Gel'fand and Lidskii \cite{Krein1950,Gel1955,KGML1958}
in 1950s. Specifically, the instability analysis of pseudo-Hermitian
matrices gives a comprehensive description on how real eigenvalues
of $H$ evolve into conjugate pairs of complex eigenvalues as the
system parameters vary. Here we briefly summarize the main results.
(i) The eigenvalues of a pseudo-Hermitian Hamiltonian $H$ are symmetric
with respect to real axis. They are either real numbers or complex
conjugate pairs. (ii) Let $\psi$ be an eigenmode (or eigenvector)
of $H$, Krein product of $\psi$ can be defined as \cite{Krein1950,Gel1955,KGML1958}
\[
\left\langle \mathbf{\psi},\psi\right\rangle =\psi^{\dagger}G\psi\thinspace.
\]
The sign of the Krein product is called Krein signature. It was found
that the physical meaning of the Krein product is action \cite{zhang2016structure},
which is partially indicated by the fact that its dimension is {[}energy{]}$\times${[}time{]}.
We will also refer to the Krein product as action, especially in the
context of physics. (ii) The eigenvalues of $H$ can be classified
according to the Krein products of the corresponding eigenvectors.
An $r$-fold real eigenvalue $\lambda$ of $H$ with its eigen-subspace
$V_{\lambda}$ is called the first kind if all eigenmodes of $\lambda$
have positive actions, i.e., $\left\langle \boldsymbol{y},\boldsymbol{y}\right\rangle >0$
for any $\boldsymbol{y}\neq0$ in $V_{\lambda}$. It is called the
second kind if all eigenmodes of $\lambda$ have negative actions.
If there exists a zero-action eigenmode, then $\lambda$ is called
an eigenvalue of mixed kind \cite{KGML1958}. If an eigenvalue is
the first kind or the second kind, it's called definite. (iii) The
number of each kind of eigenvalues is determined by the Hermitian
matrix $G$. Let $p$ be the number of positive eigenvalues and $q$
be the number of negative eigenvalues of the matrix $G$, then any
pseudo-Hermitan Hamiltonian has $p$ eigenvalues of first kind and
$q$ eigenvalues of second kind (counting multiplicity). (iv) The
finite-dimensional pseudo-Hermitian Hamiltonian is strongly stable
if and only if all of its eigenvalues lie on the real axis and are
definite. Here, a pseudo-Hermitian Hamiltonian is strongly stable
means that eigenvalues of all pseudo-Hermitian Hamiltonians in an
open neighborhood of the parameter space lie on the real axis.  As
a result, a pseudo-Hermitian Hamiltonian becomes unstable when and
only when a positive-action mode resonates with a negative-action
mode. This is a process known as the Krein collision.

Applying these results to PT-symmetric Hamiltonians, we see that PT-symmetry
breaking can happen only when a repeated eigenvalue appears as a result
of two eigenmodes resonate. However, if two eigenmodes with the same
sign of action resonate, then there is no PT-symmetry breaking. PT-symmetry
breaking is triggered only when a positive-action mode resonates with
a negative-action mode.

Let's look at an example. The governing equations for the classical
Kelvin-Helmholtz instability in fluid dynamics was shown to be a complex
system with the following PT-symmetric Hamiltonian \cite{qin2019kelvin}
\begin{equation}
H=\left(\begin{array}{cc}
\dfrac{-k(-u_{10}\rho_{10}-2u_{20}\rho_{20}+u_{10}\rho_{20})}{\rho_{10}+\rho_{20}} & \dfrac{-i|k|(u_{10}-u_{20})^{2}\rho_{20}+ig(\rho_{20}-\rho_{10})}{\rho_{10}+\rho_{20}}\\
-i|k| & ku_{10}
\end{array}\right).
\end{equation}
According to Theorem \ref{thm:3}, it is also a pseudo-Hermitian Hamiltonian
satisfying Eq.\,\eqref{eq:PH}. With straightforward calculation,
we find the following Hermitian matrix
\begin{equation}
G=\left(\begin{array}{cc}
-|k| & 0\\
0 & \dfrac{|k|(u_{10}-u_{20})^{2}\rho_{20}-g(\rho_{20}-\rho_{10})}{\rho_{10}+\rho_{20}}
\end{array}\right)
\end{equation}
such that $H^{\dagger}G-GH=0$. The eigenvalues of $H$ are
\begin{equation}
\begin{aligned}a_{1}= & \dfrac{k(\rho_{10}u_{10}+\rho_{20}u_{20})-\sqrt{\Delta}}{\rho_{10}+\rho_{20}},\\
a_{2}= & \dfrac{k(\rho_{10}u_{10}+\rho_{20}u_{20})+\sqrt{\Delta}}{\rho_{10}+\rho_{20}},
\end{aligned}
\end{equation}
and the corresponding eigenvectors are
\begin{equation}
\begin{aligned}\phi_{1}= & (\dfrac{-ik\rho_{20}(u_{10}-u_{20})+\sqrt{\Delta}}{|k|(\rho_{10}+\rho_{20})},1),\\
\phi_{2}= & (\dfrac{-ik\rho_{20}(u_{10}-u_{20})-\sqrt{\Delta}}{|k|(\rho_{10}+\rho_{20})},1),
\end{aligned}
\end{equation}
where $\Delta=-|k|g(\rho_{10}^{2}-\rho_{20}^{2})-k^{2}\rho_{10}\rho_{20}(u_{10}-u_{20})^{2}$
. The Krein signatures, or the signs of actions, of the eigenvalues
of $H$ can be determined by the Hermitian matrix $G$. When
\[
\tau\equiv\dfrac{|k|(u_{10}-u_{20})^{2}\rho_{20}-g(\rho_{20}-\rho_{10})}{\rho_{10}+\rho_{20}}<0\,,
\]
both eigenvalues of $G$ are negative and the PT-symmetric Hamiltonian
$H=iA$ is stable. When $\tau>0$, one of the eigenvalues of $G$
is positive and the other one is negative. Thus one eigenvalue of
$H$ have a positive action and the other one has a negative action,
and the resonance between them will result in PT-symmetry breaking.
Let's use a numerically calculated examples to observe the breaking
of PT-symmetry. We plot the process in Fig.\,\ref{f1} by fixing
$u_{10}=1$, $\rho_{10}=2,\thinspace\rho_{20}=3$, $k=1$ and $g=3$,
and varying $u_{20}$ from $2.3$ to $2.7$. When $u_{20}=2.3$, the
eigenvalues of $H$ are all real numbers, one of which has a positive
action (marked by $M_{+}$) and the other one has a negative action
(marked by $M_{-}$) in Fig.\,\ref{f1}(a). Fig.\,\ref{f1}(b) shows
that as $u_{20}$ increases, $M_{+}$ and $M_{-}$ move towards each
other. Increasing $u_{20}$ to $\sqrt{5/2}+1=2.58114$, eigenmodes
$M_{+}$ and $M_{-}$ collide on the real axis, as shown in Fig.\,\ref{f1}(c).
Because the resonance is between modes with different sign of actions,
the eigenvalues of $H$ split into a pair symmetric with respect to
the real axis and the PT-symmetry is broken. Fig.\,\ref{f1}(d) shows
that the two eigenvalues of $H$ move out of real axis when $u_{20}=2.7$.

\begin{figure}
\includegraphics[scale=0.8]{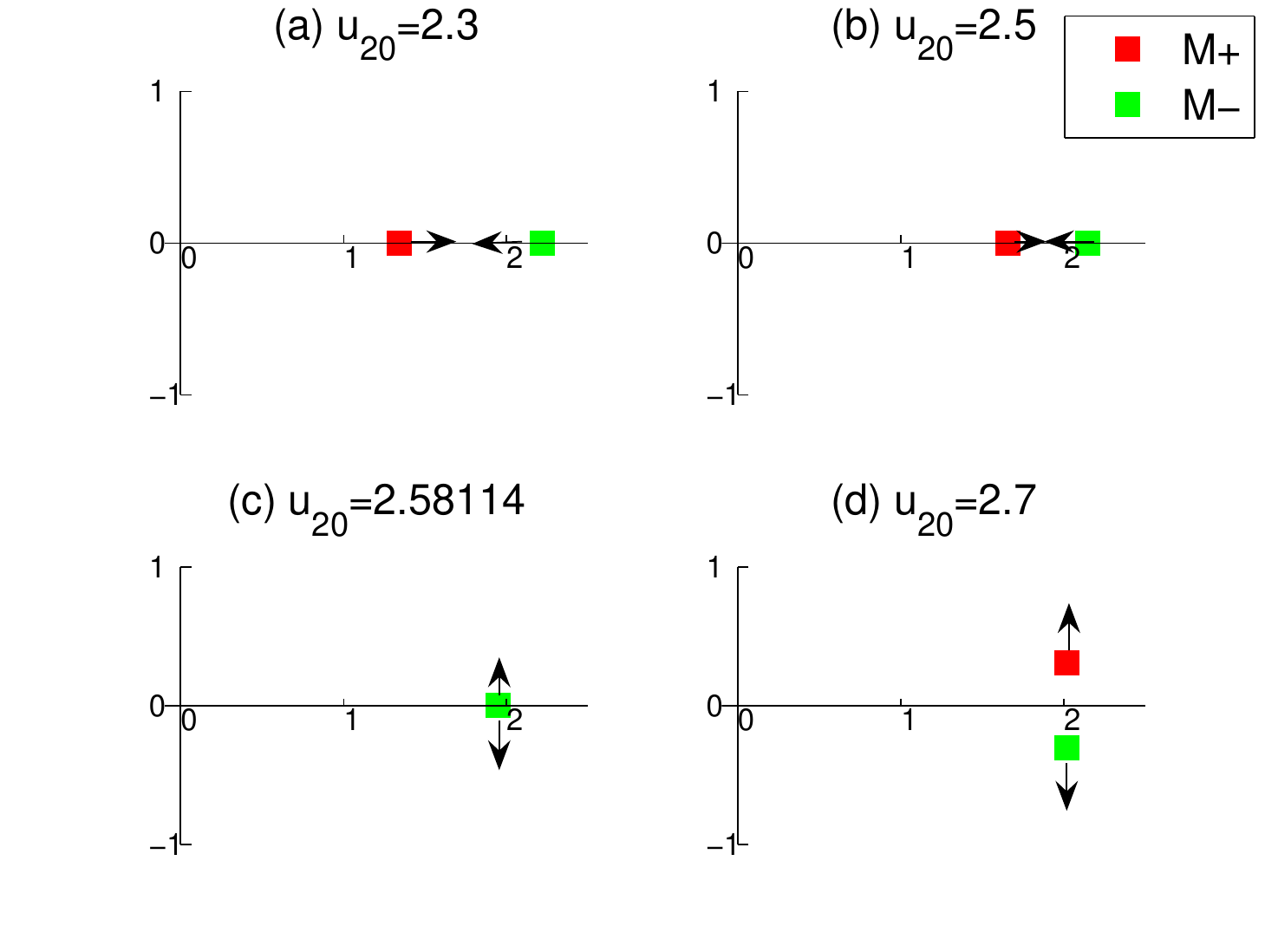}

\caption{PT-symmetry breaking occurs when a positive-action eigenmode (red)
resonates with a negative-action eigenmode (green) .}
\label{f1}
\end{figure}

In summary, we have proved that for finite-dimensional systems, a
PT-symmetric Hamiltonian is necessarily pseudo-Hermitian regardless
of whether it is diagonalizable or not. This result is stronger than
Mostafazadeh's \cite{mostafazadeh2002pseudo,mostafazadeh2002pseudoII,mostafazadeh2002pseudoIII},
which requires that the Hamiltonian is diagonalizable. As we know,
PT-symmetry breaking often happens at exceptional points where the
Hamiltonian is not diagonalizable. The fact that a PT-symmetric Hamiltonian
is always pseudo-Hermitian implies that PT-symmetry breaking is equivalent
to to the onset of instabilities of pseudo-Hermitian matrices. Therefore,
the systematic results by Krein et al. on how a pseudo-Hermitian system
becomes unstable \cite{Krein1950,Gel1955,KGML1958} can be directly
applied to the process of PT-symmetry breaking. In particular, we
showed that PT-symmetry breaking is triggered when and only when two
eigenmodes with different signs of actions resonate. This process
is illustrated using the example of the classical Kelvin-Helmholtz
instability.

We finish our discussion with an observation. Theorem \ref{thm:3}
asserts that a PT-symmetric matrix is necessarily pseudo-Hermitian.
One wonders whether the reverse is true. If the $P$ operator in the
definition of PT-symmetry (\ref{eq:PT}) is not required to be a parity
transformation, i.e., $P^{2}=I$, then a pseudo-Hermitian matrix is
also PT-symmetric according to Theorem \ref{thm:2}. In this case,
PT-symmetry and pseudo-Hermition are equivalent, at least in finite
dimensions. We note that essentially all the spectrum properties associated
with PT-symmetry are still valid when the requirement of $P^{2}=I$
is removed.
\begin{acknowledgments}
This research was supported by the National Natural Science Foundation
of China (NSFC-11775219 and NSFC-11575186), the Fundamental Research
Funds for the Central Universities (Grant No. 2017RC033), China Postdoctoral
Science Foundation (2017LH002), the National Key Research and Development
Program (2016YFA0400600, 2016YFA0400601, 2016YFA0400602 and 2017YFE0301700),
and the U.S. Department of Energy (DE-AC02-09CH11466).
\end{acknowledgments}
%
%

\end{document}